\def\BibTeX{{\rm B\kern-.05em{\sc i\kern-.025em b}\kern-.08em
    T\kern-.1667em\lower.7ex\hbox{E}\kern-.125emX}}
\DeclareMathAlphabet{\mathbbb}{U}{bbold}{m}{n}
\newtheorem{theorem}{Theorem}
\newtheorem{lemma}{Lemma}
\newtheorem{definition}{Definition}
\newtheorem{condition}{Condition}
\newtheorem{proposition}{Proposition}
\newtheorem{remark}{Remark}
\newcommand{\bbC}{\mathbb{C}}
\newcommand{\calC}{\mathcal{C}}
\newcommand{\calN}{\mathcal{N}}
\newcommand{\calS}{\mathcal{S}}
\newcommand{\calT}{\mathcal{T}}
\newcommand{\calA}{\mathcal{A}}
\newcommand{\hermconj}  {^{\mathsf{H}}}
\newcommand{\trans}     {^{\mathsf{T}}}
\newcommand{\pha}[1]    {\underline{#1}}
\newcommand{\phaconj}[1]{\overline{\underline{#1}}}
\newcommand{\vect}[1]   {\boldsymbol{#1}}
\newcommand{\phavec}[1] {\underline{\boldsymbol{#1}}}
\newcommand{\mat}[1]    {\boldsymbol{#1}}
\newcommand{\phamat}[1] {\underline{\boldsymbol{#1}}}
\newcommand{\re}[1]     {\mathscr{R}(#1)}
\newcommand{\im}[1]     {\mathscr{I}(#1)}
\begin{document}
\title{Nonlinear Stability of Complex Droop Control in Converter-Based Power Systems}
\author{Xiuqiang He, \IEEEmembership{Member, IEEE}, Verena Häberle, \IEEEmembership{Student Member, IEEE}, Irina Subotić, \IEEEmembership{Student Member, IEEE}, and Florian Dörfler, \IEEEmembership{Senior Member, IEEE}
\thanks{This work was supported by the European Union’s Horizon 2020
research and innovation program under Grant 883985.}
\thanks{The authors are with the Automatic Control Laboratory, ETH Zurich, Switzerland. Email:\{xiuqhe,verenhae,subotici,dorfler\}@ethz.ch.}
}

\maketitle
\thispagestyle{empty}
\pagestyle{empty}

\begin{abstract}
In this letter, we study the nonlinear stability problem of converter-based power systems, where the converter dynamics are governed by a complex droop control. This complex droop control augments the well-known power-frequency (p-f) droop control, and it proves to be equivalent to the state-of-the-art dispatchable virtual oscillator control (dVOC). In this regard, it is recognized as a promising grid-forming solution to address the high penetration of converters in future power systems. In previous work, the global stability of dVOC (i.e., complex droop control) has been proven by prespecifying a nominal synchronous steady state. For a general case of non-nominal (i.e., drooped) synchronous steady states, however, the stability problem requires further investigation. In this letter, we provide parametric conditions under which a non-nominal synchronous steady state exists and the system is almost globally asymptotically stable with respect to this non-nominal synchronous steady state.
\end{abstract}

\begin{IEEEkeywords}
Complex droop control, stability of nonlinear systems, complex-frequency synchronization. 
\end{IEEEkeywords}

\section{Introduction}

\IEEEPARstart{T}{HE} electric power system is currently undergoing a huge transformation caused by the replacement of conventional synchronous generation with converter-interfaced renewable energy sources. Future power systems may operate with 100\% converter-based generation \cite{Chen-100}, where converters must shoulder the responsibility of forming the grid voltage (phase and amplitude). A class of control strategies to handle this responsibility is called grid-forming control \cite{Rosso-GFM-review}. Grid-forming control methods are typically developed from the standpoint of \textit{a single converter device} \cite{Lu-param-tuning}. For \textit{a multi-converter interconnected system}, however, the stability of the system is a big concern, especially with respect to frequency synchronization and voltage stabilization.

Power-frequency (p-f) droop control represents a class of elementary grid-forming control, where the operation characteristics of synchronous generators are mimicked \cite{droop-control}. The stability of droop-controlled interconnected systems has been widely studied in the past \cite{Dorfler-kuramoto-transient-stability, Simpson-Porco, Schiffer-cell-structure, Schiffer-droop, Simpson-voltage}. In particular, since stability at the system level, especially global stability, is generally less considered during the control design stage \cite{droop-control}, it is nontrivial in general cases to guarantee global stability. Therefore, additional particular assumptions such as fixed voltage amplitudes \cite{Dorfler-kuramoto-transient-stability, Simpson-Porco, Schiffer-cell-structure}, a lossless network 
\cite{Simpson-Porco, Schiffer-cell-structure, Schiffer-droop}, or a simplified network representation \cite{Simpson-voltage} have been applied in the existing stability studies \cite{Dorfler-kuramoto-transient-stability, Simpson-Porco, Schiffer-cell-structure, Schiffer-droop, Simpson-voltage} for droop control.

Dispatchable virtual oscillator control (dVOC) is an advanced grid-forming control strategy, which has been developed recently by a top-down design \cite{Colombino-dVOC, Gross-dVOC, Subotic-dVOC} as well as a bottom-up design \cite{Lu-param-tuning}, where the global stability guarantee has been rigorously derived and experimentally validated. Moreover, dVOC has been increasingly recognized as one of the most promising grid-forming controls \cite{Lu-voc}. As presented in our recent work \cite{He-cplx-freq-sync}, dVOC can be considered as an augmentation of the standard p-f droop control. In particular, we have revealed the equivalence of dVOC to a \textit{complex droop control} from the perspective of complex frequency. The concept of complex frequency \cite{Milano-complex-freq} is an extension of classical frequency (or angular velocity) since the rates of change of both phase and amplitude are included. Aimed at the concepts of complex-frequency synchronization and classical voltage stabilization, we presented a \textit{linear stability analysis} in \cite{He-cplx-freq-sync}. This linear stability analysis is tractable and useful, but it is approximate since some assumptions are required to arrive at the linear problems of complex-frequency synchronization and voltage stabilization. A \textit{nonlinear stability analysis} for dVOC was undertaken in \cite{Colombino-dVOC, Gross-dVOC, Subotic-dVOC}, which, however, addressed only a prespecified nominal synchronous steady state and applied only to networks with a uniform $r/\ell$ ratio. Nonlinear stability analysis for the case of non-nominal (i.e., drooped) synchronous steady states and non-uniform networks, in turn, is not available in the literature so far.

In this letter, we address the nonlinear stability analysis of complex droop control for non-nominal synchronous steady states and non-uniform networks. We first leverage the linear result for complex-frequency synchronization \cite{He-cplx-freq-sync} to provide a condition for the existence of a non-nominal synchronous steady state. We then provide parametric conditions for this synchronous steady state being almost globally asymptotic stable. The stability results extend the previous ones in \cite{Gross-dVOC} to allow drooped operating points and general power networks with arbitrary impedance characteristics. The nonlinear stability results in this work are rigorous compared to our linear results in \cite{He-cplx-freq-sync}. In \cite{He-cplx-freq-sync}, the time-scale separation as well as the linear dc flow approximation were applied to make the problems linear. The nonlinear results in this work are independent of these assumptions. Moreover, the stability results are physically intuitive, providing engineering insights into power system operation and stability assessment.

The remainder of this section recalls some basic notation. The stability problem is formulated in Section \ref{sec-problem-formul} and then addressed in Section \ref{sec-stab-analy}. Case studies are shown in Section \ref{case-studies}. Section \ref{conclusion} concludes this letter.

\textit{Notation:} The set of complex numbers is denoted by $\bbC$. An underline indicates that a variable is complex, and $\re{\cdot}$ and $\im{\cdot}$ denote its real and imaginary parts, respectively. For a complex scalar $\pha x \in \bbC$, $\phaconj x$ denotes its complex conjugate. For a complex matrix $\phamat A \in \bbC^{m\times n}$, $\phamat A \trans$ and $\phamat A \hermconj$ denote its transpose and Hermitian transpose, respectively. For a real scalar $x$, a complex scalar $\pha x$, a real vector $\vect x$, and a complex vector $\phavec x$, $\abs{x}$, $\abs{\pha x}$, $\norm{\vect x}$, $\norm{\phavec x}$ denote the absolute value, the modulus, the Euclidean norm $(\vect x\trans \vect x)^{1/2}$, and the Euclidean norm $(\phavec x\hermconj \phavec x)^{1/2}$, respectively. For a matrix $\phamat A$, its induced 2-norm is denoted by $\norm{\phamat A}$. For a vector $\phavec x$, $\mathrm{diag} \left(\phavec x \right)$ denotes the diagonal matrix formed from it. The distance of a point $\phavec x $ to a set $\calC$ is denoted by $\norm{\phavec x}_\calC \coloneqq \min_{\phavec z \in \calC} \norm{\phavec z - \phavec x}$.

\section{Modeling and Stability Problem Statement}
\label{sec-problem-formul}

We consider a converter-based three-phase power system, where all nodes are interconnected by a resistive-inductive network. The converter nodes are modeled as grid-forming voltage sources \cite{Colombino-dVOC}. The load nodes are represented by constant impedance, which is a common assumption used for analytical stability studies \cite{Colombino-dVOC}\footnote{Since not all loads can be accurately represented by constant impedance, our results may not be accurate for other types of loads, particularly in the large-signal regime. It has not been well studied how to incorporate various loads (e.g., constant-power/-current loads and even dynamic loads) into nonlinear stability analysis, especially when considering voltage dynamics.}. The system is assumed to be three-phase balanced, and thus we can work in $\alpha\beta$ coordinates or the associated complex vector form.

\subsection{Power Network}

When ignoring the network dynamics, we obtain a static network representation. We further use the Kron reduction to eliminate the load nodes, obtaining a reduced network \cite{Dorfler-Kron-red}, where the set of converter nodes is denoted by $\calN = \{1, \cdots, N\}$, and the reduced network admittance matrix is denoted by $\phamat{Y} \in \bbC^{N \times N}$. We remark that the reduced network generally contains shunt branches at each node. Since the shunt loads can be absorbed into the power setpoints, we consider $\phamat{Y}$ as a complex-valued and symmetric Laplacian matrix without loss of generality \cite{He-cplx-freq-sync}. To each converter node $k \in \calN$, we associate a terminal voltage $\pha{v}_k \in \bbC$ and a converter output current $\pha{i}_{o,k} \in \bbC$. Based on the admittance matrix, the network equation can be formulated as
\begin{equation}
\label{static-network-eq}
    \phavec{i}_{o} = \phamat{Y}\, \phavec{v},
\end{equation}
where $\phavec{i}_{o} \coloneqq [\pha{i}_{o,1},\cdots,\pha{i}_{o,N}] \trans$ and $\phavec{v} \coloneqq [\pha{v}_{1},\cdots,\pha{v}_{N}] \trans$. Denote the complex voltage as $\pha{v}_k \coloneqq v_k e^{j\theta_{k}}$ with nonzero amplitude $v_k \neq 0$ and rotational phase angle $\theta_{k}$, and denote the output complex power as $\pha{s}_k \coloneqq {p_k} + j {q_k} \coloneqq \pha{v}_k \phaconj{i}_{o,k}$. We obtain the complex power-flow equations as
\begin{equation}
\label{power-flow-old}
    \pha{s}_k = \textstyle\sum\nolimits_{l = 1}^N \phaconj{y}_{kl} \pha{v}_k \phaconj{v}_l,
\end{equation}
where $\pha{y}_{kl}$ is the $k$th row and $l$th column entry of $\phamat{Y}$. Referring to \cite{He-cplx-freq-sync}, we introduce the definition of \textit{complex angle} and \textit{normalized complex power (conjugate)} respectively as
\begin{align}
\label{complex-angle-def}
    \pha {\vartheta}_k &\coloneqq \ln{v_k} + j\theta_k \ \Rightarrow \ \pha{v}_k = e^{\pha {\vartheta}_k},\\
\label{complex-power-def}
    \phaconj{\varsigma}_k &\coloneqq \phaconj{s}_k/v_k^2 = (p_k  - jq_k)/v_k^2 = {\pha{i}}_{o,k}/{\pha{v}_k}.
\end{align}
The power-flow equations in \eqref{power-flow-old} are then rewritten as
\begin{equation}
\label{power-flow-new}
    \phaconj{\varsigma}_k = \textstyle\sum\nolimits_{l = 1}^N {\pha{{y}}_{kl}} \frac{\pha{v}_l}{\pha{v}_k} = \textstyle\sum\nolimits_{l = 1}^N {{\pha{{y}}_{kl}}{e^{\pha{\vartheta}_{l} - \pha{\vartheta}_{k}}}}.
\end{equation}

\subsection{dVOC and Complex Droop Control}

\subsubsection{dVOC} When applying dVOC to the converters, their complex-valued terminal voltage behavior is given as \cite{Gross-dVOC}
\begin{equation}
\label{dvoc}
    {\pha{\dot v}_k} = j{\omega}_0 {\pha{v}_k} + \eta {e^{j\varphi }}\bigl( \phaconj{\varsigma}_k^{\star} {\pha{v}_k} - {\pha{i}}_{o,k} \bigr) + \eta \alpha \Phi_k(\pha{v}_k) {\pha{v}_k},
\end{equation}
where $\eta, \alpha > 0$ are control gains, $\phaconj{\varsigma}_k^{\star} \coloneqq (p_k^{\star}  - jq_k^{\star})/{{v_k^{\star 2}}}$ denotes the setpoint for the normalized power in \eqref{complex-power-def} with $p_k^{\star}$, $q_k^{\star}$, and $v_k^{\star}$ being the setpoints for active power, reactive power, and voltage amplitude, respectively, $e^{j\varphi}$ with $\varphi \in [0,\pi/2]$ is the rotation operator to adapt to different network impedance characteristics, and $\Phi_k(\pha{v}_k) \coloneqq ({v_k^{\star 2} - \lvert{\pha{v}_k}\rvert^2})/{v_k^{\star 2}}$ denotes the voltage regulation error. The development of dVOC was inspired by consensus synchronization \cite[Prop. 1]{Colombino-dVOC}. Briefly speaking, the first term in \eqref{dvoc} induces a harmonic oscillator at the nominal frequency, the second term synchronizes the relative phases to the power setpoints via current feedback, and the third term regulates the voltage amplitude.

\subsubsection{Complex Droop Control} We define the \textit{complex frequency} by taking the time derivative of $\pha{\vartheta}_k$ in \eqref{complex-angle-def} as
\begin{equation}
\label{complex-freq-def}
    \pha {\varpi}_k \coloneqq \dot{\pha{\vartheta}}_k = \dot v _k/v_k + j{\dot \theta _k} = {\dot{\pha{v}}_k}/{\pha{v}_k}.
\end{equation}
We then use $\dot{\pha{\vartheta}}_k = {\dot{\pha{v}}_k}/{\pha{v}_k}$ to transform \eqref{dvoc} into \textit{complex-angle coordinates}, obtaining a complex droop control as
\begin{equation}
    \label{complex-droop}
    \dot{\pha {\vartheta}}_k = \pha {\varpi}_0 + \eta e^{j\varphi} \left( \phaconj{\varsigma}_k^{\star} - \phaconj{\varsigma}_k\right) + \eta \alpha \tfrac{{v_k^{\star 2} - e^{2\re{\pha{\vartheta}_k}}}}{{v_k^{\star 2}}},
\end{equation}
where $\pha {\varpi}_0 \coloneqq j{\omega}_0$ denotes the nominal complex frequency, the second term denotes the droop gain multiplied by the imbalance of complex power, and the third term stabilizes the voltage amplitude. We remark that complex angle and complex frequency are emerging concepts recently developed in \cite{Milano-complex-freq}. These novel concepts enable a uniform representation of the phase and amplitude dynamics driven by the imbalance of complex power and thus facilitate stability analysis regarding frequency synchronization and voltage stabilization \cite{He-cplx-freq-sync}.

With $\varphi = \pi/2$, the complex droop control in \eqref{complex-droop} resembles a standard p-f and q-v droop control \cite{Colombino-dVOC}. Despite the droop-like behavior, \eqref{complex-droop} differs largely from the standard droop dynamics during large transients. For the standard droop control, the global stability guarantee has only been rigorously established for lossless networks and fixed voltage amplitudes \cite{Schiffer-cell-structure, Simpson-Porco}. In comparison, \eqref{complex-droop} is a multivariable control \cite{He-cplx-freq-sync} that allows going beyond such decoupled scenarios and obtaining a more general global stability guarantee.

\subsection{Statement of the Nonlinear Stability Problem}

The model of the dVOC-based nonlinear power system is obtained by interconnecting the network equation \eqref{static-network-eq} and the converter node dynamics in \eqref{dvoc} as
\begin{equation}
\label{nonlinear-syst}
    \dot {\phavec{v}} = \pha {\varpi}_0 \mat{I}_N {\phavec{v}} + \eta {e^{j\varphi }}( {\phamat{K} - \phamat{Y}} ) {\phavec{v}} + \eta \alpha \mat{\Phi}( {\phavec{v}}) {\phavec{v}},
\end{equation}
where $\phamat{K} \coloneqq {\rm diag}\bigl( \{ \phaconj{\varsigma}_k^{\star} \}_{k = 1}^N \bigr)$, $\mat{\Phi}({\phavec{v}}) \coloneqq {\rm diag} \bigl(\{\Phi_k(\pha{v}_k)\}_{k = 1}^N \bigr)$, and $\mat{I}_N$ is an identity matrix. In \eqref{nonlinear-syst}, we employ complex-valued differential equations to accommodate the complex-valued network admittance $\phamat{Y}$.

For the system \eqref{nonlinear-syst}, a nominal synchronous steady state is a state, where the frequencies synchronize to $\omega_0$, and the voltage amplitudes stabilize at their setpoints. To yield these power and voltage setpoints, they must satisfy the power-flow equations \cite{Gross-dVOC}. Otherwise, the system will operate in \textit{a non-nominal (drooped) synchronous steady state}, where \textit{the frequencies synchronize to a value deviating from $\omega_0$, and the voltage amplitudes stabilize at drooped points away from their setpoints}. In this work, we are concerned with the stability of the system \eqref{nonlinear-syst} with respect to a non-nominal synchronous steady state.

\section{Stability Analysis}
\label{sec-stab-analy}

We define an auxiliary system composed of the first two linear terms in the original system \eqref{nonlinear-syst} as
\begin{equation}
\label{linear-syst}
    \dot {\phavec{v}} = \phamat{A}\, {\phavec{v}}, \quad \phamat{A} \coloneqq \pha {\varpi}_0 \mat{I}_N + \eta {e^{j\varphi }}({\phamat{K} - \phamat{Y}}),
\end{equation}
which is equivalent to disabling the voltage regulation term in \eqref{nonlinear-syst} by letting $\alpha = 0$. In complex-angle coordinates, the dynamics in \eqref{linear-syst} corresponds to a complex droop control \eqref{complex-droop} without amplitude regulation, i.e., \eqref{complex-droop} with $\alpha = 0$,
\begin{equation}
    \label{complex-droop-without-u}
    \dot{\pha {\vartheta}}_k = \pha {\varpi}_0 + \eta e^{j\varphi} \left( \phaconj{\varsigma}_k^{\star} - \phaconj{\varsigma}_k\right).
\end{equation}
In the following, we first recall the prior result on complex-frequency synchronization of the linear system \eqref{linear-syst}. Based on this, we then show new results on the stability problem of the original nonlinear system \eqref{nonlinear-syst}.

\subsection{Complex-Frequency Synchronization}

\begin{definition} [Complex-frequency synchronization \cite{He-cplx-freq-sync}]
\label{complex-freq-sync-def}
The voltage trajectories $\phavec{v}$ in a connected network achieve complex-frequency synchronization if all the complex frequencies $\dot{\pha{\vartheta}}_k$ converge to a common constant complex frequency $\pha{\varpi}_{\rm{sync}}$, i.e., $\dot{\pha{\vartheta}}_k \to \pha{\varpi}_{\rm{sync}}, \, t \rightarrow \infty,\, \forall k \in \calN$.
\end{definition}

\begin{remark}
\label{remark-on-sync}
Complex-frequency synchronization implies both angular-frequency synchronization, $\dot \theta _k \to \im{\pha{\varpi}_{\rm{sync}}}$, and rate-of-change-of-voltage synchronization, $\dot{{v}}_k / {v}_k \to \re{\pha{\varpi}_{\rm{sync}}}$, $\forall k \in \calN$, where the voltage amplitudes are allowed to change but with the same exponential rate $\re{\pha{\varpi}_{\rm{sync}}}$. This novel concept of synchronization is relevant to investigate the phase-amplitude coupled dynamics in \eqref{linear-syst}.
\end{remark}

We denote the eigenvalues of the state matrix $\phamat{A}$ as $\pha{\lambda}_1$, $\pha{\lambda}_2$, $\cdots$, $\pha{\lambda}_N$, where $\re{\pha{\lambda}_1} \ge \re{\pha{\lambda}_2} \ge \cdots \ge \re{\pha{\lambda}_N}$. We term ${\pha{\lambda}_1}$ the dominant eigenvalue and assume that it has algebraic multiplicity one and $\re{\pha{\lambda_1}} > \re{\pha{\lambda_2}}$ \cite{He-cplx-freq-sync}. This is a generic assumption, reflecting the fact that power systems should have only one fundamental frequency component. If $\pha{\lambda}_1$ has algebraic multiplicity $m > 1$, then there will be multiple fundamental frequency components as $e^{\pha{\lambda}_1 t},\, te^{\pha{\lambda}_1 t},\, \cdots,\, t^{m-1}e^{\pha{\lambda}_1 t}$ in the response of the linear system \eqref{linear-syst}. We denote the eigenvector of $\pha{\lambda_1}$ as ${\phavec{\phi }_1}$ (the same for left and right eigenvectors since $\phamat{A}$ is symmetric) and denote the eigenspace spanned by ${\phavec{\phi }_1}$ as a set $\calS$, i.e.,
\begin{equation*}
    \calS \coloneqq \bigl\{\phavec{v} \in \bbC^N\, \big\vert \, \phavec{v} = \pha{\mu}\, \phavec{\phi }_1,\, \pha{\mu} \in \bbC \bigr\}.
\end{equation*}

We now recall a parametric stability condition for the auxiliary system \eqref{linear-syst}, which guarantees that the system achieves complex-frequency synchronization on the eigenspace $\calS$.

\begin{condition}[\hspace{1sp}\protect{\cite[Condition 2]{He-cplx-freq-sync}}]
\label{condition-sync}
There exists a maximal phase difference $\bar{\delta} \in [0, \pi/2)$ and a maximal voltage-amplitude ratio deviation $\bar{\gamma} \in (0, 1)$ such that $\lvert{\theta_k - \theta_l}\rvert \le \bar{\delta}$ and $\lvert {v_k}/{v_l} - 1 \rvert \le \bar{\gamma}$ hold in any synchronous state for all $k,l \in \calN$. Moreover, the power setpoints $\phaconj{\varsigma}_k^{\star}$, the rotation operator ${e^{j\varphi}}$, and the network admittance matrix $\phamat{Y}$ satisfy
\begin{equation}
\label{condition-sync-ineq}
    \max\limits_k \re{{e^{j\varphi}}\phaconj{\varsigma}_k^{\star}} < \tfrac{{1 + \cos \bar {\delta}}}{2}{\left(1 - \bar{\gamma} \right)^2} {\lambda _2} \bigl(\re{{e^{j\varphi}} \phamat{Y}}\bigr),
\end{equation}
where $\re{{e^{j\varphi}} \phamat{Y}}$ is a Laplacian matrix, formed by the real part of the entries of ${e^{j\varphi}} \phamat{Y}$, and ${\lambda _2} \bigl(\re{{e^{j\varphi}} \phamat{Y}}\bigr)$ denotes its second largest eigenvalue (a positive real number).
\end{condition}

\begin{theorem}[\hspace{1sp}\protect{\cite[Thm. 2]{He-cplx-freq-sync}}]
\label{theorem-sync}
    Under Condition \ref{condition-sync}, for almost all initial states ${\phavec{v}_0}$ except for the non-generic initial condition ${\phavec{\phi}_1\trans}{\phavec{v}_0} = 0$, the system \eqref{linear-syst} achieves complex-frequency synchronization at the donimant eigenvalue, i.e., $\pha{\varpi}_{\rm{sync}} = {\pha{\lambda}_1}$, and the voltages $\phavec{v}$ converge to the eigenspace $\calS$.
\end{theorem}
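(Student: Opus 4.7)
The plan is to exploit the linearity of \eqref{linear-syst} and reduce the claim to a modal-dominance argument. Since $\phamat{A} = \pha{\varpi}_0 \mat{I}_N + \eta e^{j\varphi}(\phamat{K} - \phamat{Y})$ is complex symmetric and, by the dominant-eigenvalue assumption, has an algebraically simple $\pha{\lambda}_1$ with $\re{\pha{\lambda}_1} > \re{\pha{\lambda}_i}$ for $i \ge 2$, I would first assume generic diagonalizability of $\phamat{A}$ and write any solution in the eigenbasis as $\phavec{v}(t) = \sum_{i=1}^N \pha{c}_i e^{\pha{\lambda}_i t} \phavec{\phi}_i$. Complex symmetry $\phamat{A}\trans = \phamat{A}$ forces the left and right eigenvectors of $\pha{\lambda}_1$ to coincide, so bi-orthogonality (under the bilinear form $\phavec{u}\trans \phavec{w}$) gives $\pha{c}_1 \propto \phavec{\phi}_1\trans \phavec{v}_0$. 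The exceptional locus $\{\phavec{v}_0 : \phavec{\phi}_1\trans \phavec{v}_0 = 0\}$ is then a proper complex hyperplane of measure zero, supplying the ``almost all'' qualifier exactly as stated.

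Next, to show convergence to $\calS$, I would normalize by the dominant mode, obtaining
$e^{-\pha{\lambda}_1 t}\phavec{v}(t) = \pha{c}_1 \phavec{\phi}_1 + \sum_{i \ge 2} \pha{c}_i e^{(\pha{\lambda}_i - \pha{\lambda}_1) t} \phavec{\phi}_i$.
The strict spectral gap drives the residual to zero exponentially fast, hence $\norm{\phavec{v}(t)}_\calS$ is eventually dominated by $\norm{\pha{c}_1 e^{\pha{\lambda}_1 t}\phavec{\phi}_1}$ times a vanishing quantity; this is the sense in which the trajectory collapses to the one-dimensional eigenspace $\calS$.

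For the complex-frequency claim, I would compute $\pha{\varpi}_k(t) = \dot{\pha{v}}_k(t)/\pha{v}_k(t)$ using the same modal decomposition, factor $e^{\pha{\lambda}_1 t}\phi_{1,k}$ from numerator and denominator, and pass to the limit to obtain $\pha{\varpi}_k(t) \to \pha{\lambda}_1$ at every node $k$ with nonzero $\phi_{1,k}$. The main obstacle, and the point where Condition~\ref{condition-sync} is indispensable, is ruling out $\phi_{1,k} = 0$ for any $k$: the bound $\lvert v_k/v_l - 1 \rvert \le \bar{\gamma} < 1$ imposed on every synchronous state forces the amplitudes to be pairwise comparable, so the synchronous eigenvector $\phavec{\phi}_1$ has no vanishing entries. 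The spectral inequality \eqref{condition-sync-ineq} in addition guarantees that $\pha{\lambda}_1$ is indeed the synchronous mode of \eqref{linear-syst} rather than an oscillatory parasitic mode. Combining these ingredients yields complex-frequency synchronization at $\pha{\varpi}_{\rm sync} = \pha{\lambda}_1$ and the convergence of $\phavec{v}$ to $\calS$.
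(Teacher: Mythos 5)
First, a remark on what the paper itself does: Theorem~\ref{theorem-sync} is not proved in this letter at all --- it is imported verbatim from the cited reference, and the only justification offered in the surrounding text is the one-line observation that Condition~\ref{condition-sync} implies $\re{\pha{\lambda}_k}<0$ for all $k\ge 2$, so that the dominant mode $e^{\pha{\lambda}_1 t}$ survives while every other mode decays to zero. Your modal-decomposition strategy is therefore exactly the intended argument, and several of your ingredients match the paper's reading precisely: the coefficient $\pha{c}_1\propto\phavec{\phi}_1\trans\phavec{v}_0$ obtained from bi-orthogonality under the bilinear form (so the exceptional set is the measure-zero hyperplane $\phavec{\phi}_1\trans\phavec{v}_0=0$), the nonvanishing of the entries of $\phavec{\phi}_1$ deduced from the amplitude-ratio bound $\lvert v_k/v_l-1\rvert\le\bar{\gamma}<1$, and the pointwise limit $\dot{\pha{v}}_k/\pha{v}_k\to\pha{\lambda}_1$ after factoring out the dominant mode.

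There is, however, one genuine gap, and it sits exactly where Condition~\ref{condition-sync} does its real work. The strict spectral gap $\re{\pha{\lambda}_1}>\re{\pha{\lambda}_2}$ only yields \emph{relative} convergence: $e^{-\pha{\lambda}_1 t}\phavec{v}(t)\to\pha{c}_1\phavec{\phi}_1$, i.e., convergence of the direction of the trajectory. The theorem asserts that the distance $\norm{\phavec{v}(t)}_{\calS}$ tends to zero, and since $\calS$ is a linear subspace this distance is controlled by the non-dominant part $\sum_{i\ge 2}\pha{c}_i e^{\pha{\lambda}_i t}\phavec{\phi}_i$, whose norm behaves like $e^{\re{\pha{\lambda}_2}t}$ (up to polynomial factors); it vanishes only if $\re{\pha{\lambda}_k}<0$ for all $k\ge 2$. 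This is not implied by the gap alone: in the paper's own example $\re{\pha{\lambda}_1}=4.13>0$ and the trajectory is unbounded, so ``dominated by the leading mode times a vanishing quantity'' does not give $\norm{\phavec{v}}_{\calS}\to 0$. The missing ingredient is precisely the quantitative inequality \eqref{condition-sync-ineq}, which bounds the rotated power setpoints by the algebraic connectivity $\lambda_2\bigl(\re{e^{j\varphi}\phamat{Y}}\bigr)$ and thereby forces the real parts of all non-dominant eigenvalues to be negative --- the paper states this explicitly. Your proposal instead assigns \eqref{condition-sync-ineq} the vague role of ensuring $\pha{\lambda}_1$ is ``the synchronous mode rather than an oscillatory parasitic mode,'' which is not a step in the argument. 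A second, minor point: a complex symmetric matrix need not be diagonalizable, so ``generic diagonalizability'' is an extra hypothesis not present in the statement; you only need $\pha{\lambda}_1$ algebraically simple (which already guarantees $\phavec{\phi}_1\trans\phavec{\phi}_1\neq 0$ and a rank-one spectral projector) and can treat the rest of the spectrum via the Jordan form, since polynomial prefactors do not affect the exponential comparison.
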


The interpretation of Condition \ref{condition-sync} is deferred to later (after the closely related Condition \ref{condition-stability}). As shown in \cite{He-cplx-freq-sync}, Condition \ref{condition-sync} gives that $\re{\pha{\lambda}_k} < 0,\, \forall k \geq 2$. Thus, Theorem \ref{theorem-sync} follows intuitively from the fact that the linear dynamics in \eqref{linear-syst} are dominated by the dominant mode $e^{{\pha{\lambda}_1}t}$ while the other non-dominant modes decay to zero. When viewed from complex-angle coordinates, the complex-frequency synchronization is an outcome of the complex droop control in \eqref{complex-droop-without-u}.

Let $\pha{\gamma}_{lk} \coloneqq \pha{\gamma}_{l1}/\pha{\gamma}_{k1}$ denotes the complex-voltage ratios for all $l,k \in \calN$, where $\pha{\gamma}_{k1} \coloneqq \pha{v}_k/\pha{v}_1$, $\forall k \in \calN$, is the ratio to the node $k = 1$. The power-flow equations in \eqref{power-flow-new} can then be rewritten as $\phaconj{\varsigma}_k = \textstyle\sum\nolimits_{l = 1}^N {\pha{{y}}_{kl}} \pha{\gamma}_{lk}$. Since $\phavec{v}$ converges to $\calS$, $\pha{\gamma}_{lk}$ converges to a constant $\pha{\phi}_{l1}/\pha{\phi}_{k1}$, where $\pha{\phi}_{k1}$ denotes the $k$th entry of $\phavec{\phi}_{1}$ and $\pha{\phi}_{k1} \neq 0$ is guaranteed by Condition \ref{condition-sync} since ${v_k}/{v_l}$ is nonzero and bounded. It follows that both $\pha{\gamma}_{lk}$ and $\phaconj{\varsigma}_k$ remain unchanged during complex-frequency synchronization. For a complex-frequency synchronous state, we emphasize that it is not a steady state unless the real part of the complex frequency converges to zero.

\subsection{Existence of Non-Nominal Synchronous Steady State}

Consider the original system \eqref{nonlinear-syst} with amplitude regulation enabled. We impose some constraints on the voltage setpoints to guarantee the existence of a synchronous steady state. These constraints are imposed according to the complex-frequency synchronous state without amplitude regulation.

\begin{condition}
\label{condi-setpoints}
Let Condition \ref{condition-sync} hold. Consider the complex-frequency synchronization of the complex droop control $\dot{\pha {\vartheta}}_k = \pha {\varpi}_0 + \eta e^{j\varphi} \bigl( \phaconj{\varsigma}_k^{\star} - \textstyle\sum\nolimits_{l = 1}^N {\pha{{y}}_{kl}} \pha{\gamma}_{lk} \bigr)$, where $\dot{\pha {\vartheta}}_k$ synchronizes to $\pha{\varpi}_{\rm{sync}} = {\pha{\lambda}_1}$ and $\pha{\gamma}_{lk}$ converges to $\pha{\phi}_{l1}/\pha{\phi}_{k1}$. When enabling voltage amplitude regulation, assume that the ratios between the voltage setpoints are consistent with the voltage amplitude ratios under the complex-frequency synchronization, i.e., $v_l^{\star}/v_k^{\star} = \lvert \pha{\phi}_{l1} \rvert / \lvert \pha{\phi}_{k1} \rvert, \, \forall k,l \in \calN$.
\end{condition}

In Condition \ref{condi-setpoints}, $\pha{\varpi}_{\rm sync}$ can be drooped. The particular case, $\pha{\varpi}_{\rm sync} = \pha {\varpi}_0$, leads to a nominal synchronous steady state, implicitly expressed by the power-flow equations $\phaconj{\varsigma}_k^{\star} = \textstyle\sum\nolimits_{l = 1}^N \pha{y}_{kl}\pha{\gamma}_{lk}$ with $ \lvert {\pha{\gamma}_{lk}} \rvert = v_l^{\star}/v_k^{\star}$, equivalent to \cite[Condition 1]{Gross-dVOC} prescribed for a nominal synchronous steady state. Therefore, Condition \ref{condi-setpoints} relaxes \cite[Condition 1]{Gross-dVOC}. However, the voltage setpoint constraints in Condition \ref{condi-setpoints} are still restrictive, and not as arbitrary as power setpoints. Therefore, not all conceivable synchronous steady states are encompassed in Condition \ref{condi-setpoints}. When it comes to the benefit from Condition \ref{condi-setpoints}, it not only guarantees the existence of a synchronous steady state but also directly provides the concrete steady-state values for frequency and voltage (as well as power outputs, without the need for power-flow calculations).

Consider the system \eqref{nonlinear-syst} with $\alpha \neq 0$ and with consistent voltage setpoints as in Condition \ref{condi-setpoints}. We define steady-state voltage amplitudes by a set $\calA$ as 
\begin{equation*}
    \calA \coloneqq \bigl\{\phavec{v} \in \bbC^N\, \big \vert \, \lvert \pha{v}_k \rvert = v_k^{\star}\sqrt{1 + {\re{\pha{\lambda}_1}}/(\eta\alpha)},\, \forall k\in \calN \bigr\},
\end{equation*}
where we assume that $1 + {\re{\pha{\lambda}_1}}/(\eta\alpha) > 0$ to avoid an ill-posed problem. We then define a set of synchronous steady states by a compact set $\calT$ as
\begin{equation*}
    \calT \coloneqq \calS \cap \calA.
\end{equation*}

\begin{proposition}
\label{prop-existence}
    Under Condition \ref{condi-setpoints}, the system \eqref{nonlinear-syst} has a synchronous steady state in the set $\calT$, where the synchronous frequency is $\dot{\theta}_k = \im{\pha{\lambda}_1}$, and the voltage amplitude steady state is $v_k = v_k^{\star}\sqrt{1 + \re{\pha{\lambda}_1} /(\eta\alpha)},\, k \in \calN$.
\end{proposition}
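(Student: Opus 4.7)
The plan is to exhibit a steady state explicitly by restricting the dynamics \eqref{nonlinear-syst} to the invariant eigenspace $\calS$, and then to reduce the system to a scalar complex ODE whose fixed-amplitude orbits yield the claim. First, I would exploit Condition \ref{condi-setpoints} to rescale the dominant eigenvector: because $v_l^{\star}/v_k^{\star} = \lvert \pha{\phi}_{l1}\rvert / \lvert \pha{\phi}_{k1}\rvert$ for all $k,l$, the ratio $\lvert \pha{\phi}_{k1}\rvert / v_k^{\star}$ is independent of $k$, so after multiplying $\phavec{\phi}_1$ by a suitable complex constant (which is allowed since $\calS$ is a one-dimensional complex subspace) I may assume $\lvert \pha{\phi}_{k1}\rvert = v_k^{\star}$ for all $k \in \calN$.

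Next, I would substitute the ansatz $\phavec{v}(t) = \pha{\mu}(t)\,\phavec{\phi}_1$ with $\pha{\mu}\in\bbC$ into \eqref{nonlinear-syst}. For the linear part, the eigenvector identity $\phamat{A}\phavec{\phi}_1 = \pha{\lambda}_1 \phavec{\phi}_1$ immediately gives $\phamat{A}\phavec{v} = \pha{\lambda}_1 \pha{\mu}\phavec{\phi}_1$. For the amplitude-regulation term, the normalization above yields
\begin{equation*}
    \Phi_k(\pha{v}_k) = \frac{v_k^{\star 2} - \lvert \pha{\mu}\rvert^2 \lvert \pha{\phi}_{k1}\rvert^2}{v_k^{\star 2}} = 1 - \lvert \pha{\mu}\rvert^2,
\end{equation*}
which is the same for every $k$, so $\mat{\Phi}(\phavec{v}) = (1-\lvert\pha{\mu}\rvert^2)\mat{I}_N$ on $\calS$. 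Consequently the whole vector field is parallel to $\phavec{\phi}_1$, confirming invariance of $\calS$, and the dynamics collapse to the scalar complex ODE
\begin{equation*}
    \dot{\pha{\mu}} = \pha{\lambda}_1\,\pha{\mu} + \eta\alpha\bigl(1 - \lvert \pha{\mu}\rvert^2\bigr)\pha{\mu}.
\end{equation*}

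Finally, I would look for a relative equilibrium of this scalar equation of the form $\pha{\mu}(t) = r\,e^{j(\omega t + \psi)}$ with $r>0$ constant. Plugging in and matching real and imaginary parts of $j\omega = \pha{\lambda}_1 + \eta\alpha(1-r^2)$ gives
\begin{equation*}
    r^2 = 1 + \re{\pha{\lambda}_1}/(\eta\alpha), \qquad \omega = \im{\pha{\lambda}_1},
\end{equation*}
where the square-root is real and positive by the well-posedness assumption $1+\re{\pha{\lambda}_1}/(\eta\alpha)>0$ built into the definition of $\calA$. Translating back, $\lvert \pha{v}_k\rvert = r\,\lvert \pha{\phi}_{k1}\rvert = v_k^{\star}\sqrt{1+\re{\pha{\lambda}_1}/(\eta\alpha)}$ places the trajectory in $\calA$, while $\phavec{v}\in\calS$ is built in, so the orbit lies in $\calT = \calS\cap\calA$; moreover each $\theta_k$ increases at rate $\omega = \im{\pha{\lambda}_1}$, establishing synchronous operation. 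The only nontrivial step is the normalization of $\phavec{\phi}_1$ under Condition \ref{condi-setpoints}, which is what decouples the diagonal matrix $\mat{\Phi}(\phavec{v})$ into a scalar multiple of the identity and makes the eigenspace invariant under the full nonlinear dynamics; everything else is direct substitution.
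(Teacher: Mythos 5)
Your proposal is correct and follows essentially the same route as the paper: both exploit Condition \ref{condi-setpoints} to align the voltage setpoints with the dominant eigenvector so that $\mat{\Phi}(\phavec{v})$ collapses to a scalar multiple of the identity on $\calS$, after which the eigenvector relation $\phamat{A}\phavec{\phi}_1 = \pha{\lambda}_1\phavec{\phi}_1$ reduces the vector field to a pure rotation at rate $\im{\pha{\lambda}_1}$ with the stated amplitude. The only (cosmetic) difference is that you derive the reduced scalar ODE on the invariant eigenspace and solve for the relative equilibrium, whereas the paper directly substitutes a point of $\calT$ and verifies $\dot{\phavec{v}} = j\,\im{\pha{\lambda}_1}\phavec{v}$; the underlying computation is identical.
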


\begin{proof}
We first claim that $\calT$ is nonempty under Condition \ref{condi-setpoints}. This is true because the voltage amplitudes in $\calA$ fit the eigenvector in $\calS$, i.e., $\lvert \pha{v}_l \rvert/\lvert \pha{v}_k \rvert = v_l^{\star}/v_k^{\star} = \lvert \pha{\phi}_{l1} \rvert / \lvert \pha{\phi}_{k1} \rvert, \forall k,l \in \calN$. We then show that the set $\calT$ satisfies frequency synchronization and voltage steady state. Substituting $\forall \phavec{v} \in \calT$ into the system dynamics \eqref{nonlinear-syst}, we obtain that 
\begin{equation}
    \begin{aligned}
        \dot {\phavec{v}} &= \phamat{A}\, {\phavec{v}} + \eta \alpha {\rm diag} \bigl(\bigl\{ \tfrac{{v_k^{\star 2} - \lvert{\pha{v}_k}\rvert^2}}{v_k^{\star 2}} \bigr\}_{k = 1}^N \bigr) {\phavec{v}}\\
        &= \phamat{A}\, {\phavec{v}} - \eta \alpha \tfrac{\re{\pha{\lambda}_1}}{\eta \alpha } {\phavec{v}}\\
        &= \pha{\lambda}_1 {\phavec{v}} - \re{\pha{\lambda}_1} {\phavec{v}} 
        = \im{\pha{\lambda}_1} {\phavec{v}}, \quad \forall \phavec{v} \in \calT,
    \end{aligned}
\end{equation}
where the second equality holds with the voltage amplitudes specified in $\calA$, and the third equality holds due to $\phavec{v} \in \calS$. The last equality indicates that both the frequencies $\dot{\theta}_k = \im{\pha{\lambda}_1}$ and the voltage amplitudes remain invariant in $\calT$.
\end{proof}

\subsection{Conditions for Stability}

Next, we provide a parametric stability condition for the system \eqref{nonlinear-syst} and then show an analytical stability result.

\begin{condition}
\label{condition-stability}
Let Conditions \ref{condition-sync} and \ref{condi-setpoints} hold. In terms of Condition \ref{condition-sync}, it is further assumed that
\begin{equation}
\label{condition-stability-ineq}
    \max\limits_k \re{{e^{j\varphi }}\phaconj{\varsigma}_k^{\star}} + \alpha < \tfrac{{1 + \cos \bar {\delta}}}{2}{\left(1 - \bar{\gamma} \right)^2} {\lambda _2} \bigl(\re{{e^{j\varphi }} \phamat{Y}}\bigr).
\end{equation}
\end{condition}

Compared to \eqref{condition-sync-ineq} in Condition \ref{condition-sync}, \eqref{condition-stability-ineq} in Condition \ref{condition-stability} is strengthened with the appearance of the voltage regulation gain $\alpha$ on the left-hand side. This coincides with the fact that the system \eqref{nonlinear-syst} is augmented with voltage regulation compared to \eqref{linear-syst}. In Condition \ref{condition-sync} or \ref{condition-stability}, the assumptions on the phase differences and the voltage ratios in the synchronous (steady) state are reasonable and reflect operational constraints since power systems are supposed to operate in a healthy state with close node voltages and phase angles. Moreover, ${\lambda _2} \bigl(\re{{e^{j\varphi }} \phamat{Y}}\bigr)$ denotes the algebraic connectivity of the graph corresponding to the real part of the rotated admittance matrix. The inequality \eqref{condition-stability-ineq} quantifies the margin of stability, for which the network should be sufficiently well connected and not be heavily loaded, and the voltage regulation should not be too fast. We relate \eqref{condition-stability-ineq} to the previous conditions in \cite{Colombino-dVOC, Gross-dVOC, Subotic-dVOC} in the sense that \eqref{condition-stability-ineq} extends the previous ones to non-synchronous steady states as well as non-uniform networks. In particular, the term $\re{{e^{j\varphi }} \phamat{Y}}$ quantifies how the real part of the rotated admittance matrix matters for stability, which unveils a new stability factor for networks with non-uniform $r/\ell$ ratios.

We define almost global asymptotic stability with respect to a set \cite{Colombino-dVOC} before presenting the stability result. For power systems, this notion implies that almost all initial states lead to a synchronous steady state, except for a negligible (zero-measure) of initial states leading to voltage collapse.

\begin{definition}[Almost globally asymptotic stability]
A dynamic system is almost globally asymptotically stable with respect to a set $\calT$ if it is Lyapunov stable with respect to $\calT$ and for all initial states, except those contained in a zero measure set, the trajectories converge to $\calT$.
\end{definition}

\begin{theorem}
\label{theorem-stability}
    Under Condition \ref{condition-stability}, the system \eqref{nonlinear-syst} is almost globally asymptotically stable with respect to the set $\calT$.
\end{theorem}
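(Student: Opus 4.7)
The set $\calT = \calS \cap \calA$ factors nicely, so my plan is to establish convergence in two complementary stages: (i) attractivity of the eigenspace $\calS$, and (ii) attractivity of the amplitude level set $\calA$ within $\calS$. The key structural observation that makes this decoupling work is the following. Condition~\ref{condi-setpoints} forces $v_k^\star = c\lvert\pha{\phi}_{k1}\rvert$ for a common constant $c > 0$, so whenever $\phavec{v} = \pha{\mu}\phavec{\phi}_1 \in \calS$ one has $\Phi_k(\pha{v}_k) = 1 - \lvert\pha{\mu}\rvert^2/c^2$, \emph{independent of $k$}. Thus $\mat{\Phi}(\phavec{v})$ collapses to a scalar multiple of $\mat{I}_N$ on $\calS$, which immediately makes $\calS$ forward-invariant under the full nonlinear system \eqref{nonlinear-syst}.

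For stage (ii), parametrize $\phavec{v} = \pha{\mu}\phavec{\phi}_1$ with $\pha{\mu} = re^{j\psi}$; the scalar-$\mat{\Phi}$ observation reduces \eqref{nonlinear-syst} restricted to $\calS$ to the planar system $\dot{r} = r\bigl[\re{\pha{\lambda}_1} + \eta\alpha(1 - r^2/c^2)\bigr]$ and $\dot{\psi} = \im{\pha{\lambda}_1}$. The amplitude ODE is a scalar cubic with a unique positive equilibrium $r^\star = c\sqrt{1 + \re{\pha{\lambda}_1}/(\eta\alpha)}$ that is globally asymptotically stable for $r > 0$; translating back, $\lvert\pha{v}_k\rvert \to v_k^\star\sqrt{1 + \re{\pha{\lambda}_1}/(\eta\alpha)}$, matching $\calA$ exactly as in Proposition~\ref{prop-existence}, while the phase rotates uniformly at $\im{\pha{\lambda}_1}$. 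The only exception on $\calS$ is $r = 0$, an unstable equilibrium (voltage collapse) forming a zero-measure set.

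For stage (i), I would adapt the Lyapunov argument underlying Theorem~\ref{theorem-sync} to the perturbed system \eqref{nonlinear-syst}, using a functional $V(\phavec{v})$ that vanishes precisely on $\calS$ (a weighted quadratic form built from the non-dominant eigenvectors of the complex-symmetric matrix $\phamat{A}$). Along trajectories of the purely linear system \eqref{linear-syst}, $V$ decays at a rate controlled by Condition~\ref{condition-sync}. The extra term $\eta\alpha\mat{\Phi}(\phavec{v})\phavec{v}$ is a multiplicative perturbation whose diagonal entries are bounded (roughly) by $\alpha$ within the operational region described by $\bar\delta,\bar\gamma$; adding this worst-case contribution to the left-hand side of \eqref{condition-sync-ineq} produces exactly the strengthened inequality \eqref{condition-stability-ineq}. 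Hence under Condition~\ref{condition-stability} the Lyapunov decrease survives and $\phavec{v} \to \calS$ for all initial states off the non-generic set $\{\phavec{\phi}_1\trans\phavec{v}_0 = 0\}$ inherited from Theorem~\ref{theorem-sync}. Combining the two stages via LaSalle's invariance principle (interpreted modulo the uniform phase rotation at $\im{\pha{\lambda}_1}$) yields almost global asymptotic stability with respect to $\calT$, the ``almost'' accounting for the union of the two zero-measure obstructions ($\phavec{v} = \mathbf{0}$ and $\phavec{\phi}_1\trans\phavec{v}_0 = 0$).

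The hardest step is the perturbation estimate in stage (i): quantitatively showing that, along \emph{nonlinear} trajectories, the contribution of $\eta\alpha\mat{\Phi}(\phavec{v})$ to $\dot V$ never consumes more than the $+\alpha$ slack built into \eqref{condition-stability-ineq}. This requires either a forward-invariance argument ensuring that the state remains in the $(\bar\delta,\bar\gamma)$-operational region throughout the transient, or a more refined inequality bounding the transverse-to-$\calS$ component of $\eta\alpha\mat{\Phi}(\phavec{v})\phavec{v}$ directly. Once this estimate is secured, the two stages assemble cleanly thanks to the scalar-$\mat{\Phi}$ identity, which decouples the restricted flow on $\calS$ from the transverse contraction toward $\calS$.
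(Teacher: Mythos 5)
Your two-stage plan (transverse contraction to $\calS$, then convergence to $\calA$ along the invariant restricted flow) is genuinely different from the paper's argument, which uses a single composite Lyapunov function $V = \tfrac12\phavec{v}\hermconj\phamat{P}\,\phavec{v} + \tfrac12\eta\alpha\alpha_1\sum_k\bigl(\tfrac{\re{\pha{\lambda}_1}}{\eta\alpha}v_k^\star + \tfrac{v_k^{\star2}-\lvert\pha{v}_k\rvert^2}{v_k^\star}\bigr)^2$ whose second summand penalizes the amplitude error directly, so that $\dot V$ is negative definite with respect to $\calT\cup\{\text{origin}\}$ in one shot. Your stage (ii) is correct: the scalar-$\mat{\Phi}$ identity on $\calS$, the invariance of $\calS$, and the planar reduction with the unique attracting positive amplitude all check out and recover Proposition~\ref{prop-existence}. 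But two steps in your plan are genuine gaps.

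First, the transverse estimate you yourself flag as the hardest step is left open, and neither of your proposed routes closes it as stated. A forward-invariance argument confining the state to the $(\bar\delta,\bar\gamma)$ region cannot yield an \emph{almost global} result, since almost all initial states lie outside that region. The estimate that works (the paper's Lemma~\ref{lemma-aux1}) is global and is not a bound of the form ``diagonal entries of $\eta\alpha\mat{\Phi}$ bounded by $\alpha$ in the operational region'': it uses $\Phi_k\le 1$ for all $\phavec{v}$ together with the setpoint consistency $v_k^\star\propto\lvert\pha{\phi}_{k1}\rvert$ from Condition~\ref{condi-setpoints} to prove $\phavec{v}\hermconj(\mat{\Phi}\phamat{P}+\phamat{P}\mat{\Phi})\phavec{v}\le 2\,\phavec{v}\hermconj\phamat{P}\,\phavec{v}$ on all of $\bbC^N$; the difficulty is the non-commutativity of $\mat{\Phi}$ and $\phamat{P}$, not the size of $\Phi_k$, and the $+\alpha$ slack in \eqref{condition-stability-ineq} is then consumed via Lemma~\ref{lemma-aux2}. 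Second, even granting both stages, ``combining via LaSalle'' does not assemble them: a LaSalle argument with a functional vanishing exactly on the invariant set $\calS$ only yields convergence to $\calS$, not to the attractor $\calT$ of the flow restricted to $\calS$. You would need a reduction theorem (attractivity of $\calS$, asymptotic stability of $\calT$ relative to $\calS$, boundedness of trajectories), in its almost-global form, to conclude; the paper sidesteps this entirely with the composite $V$. Relatedly, your exceptional set is misidentified: since the transverse contraction holds for \emph{all} initial states, every trajectory reaches $\calS$; the measure-zero obstruction is the region of attraction of the unstable origin of the nonlinear system, not the hyperplane $\phavec{\phi}_1\trans\phavec{v}_0=0$ inherited from the linear Theorem~\ref{theorem-sync}.
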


\begin{proof}
We use a similar proof as in \cite[Sec. IV-C]{Gross-dVOC}. The main difference from the proof in \cite{Gross-dVOC} is that we need to adapt the Lyapunov function to non-nominal synchronous steady states. Moreover, we work with complex variables to handle the complex-valued admittance matrix $\phamat{Y}$.

A matrix $\phamat{P} \coloneqq \mat{I}_N - {\phavec{\phi }_1} {\phavec{\phi }_1\hermconj}/({\phavec{\phi }_1\hermconj} {\phavec{\phi }_1})$ is defined as the projector onto the subspace orthogonal to $\calS$. The distance of $\phavec{v}$ to the set $\calS$ is then given by $\norm{\phavec{v}} _{\calS} = \norm{\phamat{P}\,\phavec{v}}$. From $\phamat{P}\hermconj = \phamat{P}$ and $\phamat{P}^2 = \phamat{P}$, it follows that $\phavec{v}\hermconj \phamat{P}\, \phavec{v} = \norm{\phamat{P}\,\phavec{v}}^2 = \norm{\phavec{v}} _{\calS} ^2$. We consider a Lyapunov function candidate $V$ as
\begin{equation*}
    V \coloneqq \frac{1}{2} \phavec{v}\hermconj \phamat{P}\, \phavec{v} + \frac{1}{2} \eta\alpha \alpha_1 \textstyle\sum\limits_{k=1}^{N}{\Bigl( \frac{\re{\pha{\lambda}_1}}{\eta \alpha} v_k^{\star} + \frac{v_k^{\star 2} - \lvert{\pha{v}_k} \rvert^2}{v_k^{\star}} \Bigr)^2}.
\end{equation*}
where the constant $\alpha_1$ is given by
\begin{equation}
\label{alpha1-def}
    \alpha_1 \coloneqq {\eta c}/\bigl({5 \norm{\phamat{A} - \pha{\lambda}_1 \mat{I}_N}^2}\bigr) > 0
\end{equation}
with the constant $c > 0$ defined by
\begin{equation}
\label{c-def}
    c \coloneqq \tfrac{{1 + \cos \bar {\delta}}}{2}{\left(1 - \bar{\gamma} \right)^2} {\lambda _2} \bigl(\re{{e^{j\varphi }} \phamat{Y}}\bigr) - \max\limits_k \re{{e^{j\varphi }}\phaconj{\varsigma}_k^{\star}} - \alpha.
\end{equation}
We observe that $V(\phavec{v})$ is positive definite and radially unbounded with respect to the compact set $\calT = \calS \cap \calA$. We derive the derivative of $V(\phavec{v})$ along the dynamics in \eqref{nonlinear-syst} as
\begin{align*}
    \dot{V} &= \tfrac{1}{2} \phavec{v}\hermconj (\phamat{A}\hermconj \phamat{P} + \phamat{P}\, \phamat{A}) \phavec{v} + \tfrac{1}{2} \eta \alpha \phavec{v}\hermconj (\mat{\Phi} \phamat{P} + \phamat{P} \mat{\Phi}) \phavec{v} \\
    & \quad\, -2\alpha_1 \textstyle\sum\nolimits_{k=1}^{N}{ \bigl( \re{\pha{\lambda}_1} + \eta\alpha \Phi_k \bigr) \re{ \dot{\pha{v}}_k \phaconj{v}_k} },
\end{align*}
where $\frac{{\rm d}}{{\rm d}t}{ (\lvert {\pha{v}_k} \rvert^2) } = \frac{{\rm d}}{{\rm d}t}{ (\pha{v}_k \phaconj{v}_k) } = \dot{\pha{v}}_k \phaconj{v}_k + \pha{v}_k \dot{\phaconj{v}}_k = 2\re{ \dot{\pha{v}}_k \phaconj{v}_k }$. We further obtain from $\phamat{A} {\phavec{\phi }_1} = {\pha{\lambda}_1} {\phavec{\phi }_1}$, $\phamat{P} = \mat{I}_N - {\phavec{\phi }_1} {\phavec{\phi }_1\hermconj}/({\phavec{\phi }_1\hermconj} {\phavec{\phi }_1})$ that $\phamat{A} - {\pha{\lambda}_1} \mat{I}_N = (\phamat{A} - {\pha{\lambda}_1}\mat{I}_N)\phamat{P}$, and then that $\phamat{A} = \phamat{A}\,\phamat{P} + {\pha{\lambda}_1} (\mat{I}_N - \phamat{P})$. It follows that $\phamat{P}\, \phamat{A} = \phamat{P}\, \phamat{A}\, \phamat{P}$ and $\phamat{A} \hermconj \phamat{P} = \phamat{P}\, \phamat{A}\hermconj \phamat{P}$. Using these dependencies, we can rewrite $\dot{V}$, and then bound it using Lemmas \ref{lemma-aux1} and \ref{lemma-aux2} in Appendix as follows
\begin{align}
    \dot{V} &= \tfrac{1}{2} \phavec{v}\hermconj \phamat{P} (\phamat{A}\hermconj + \phamat{A}) \phamat{P}\, \phavec{v} + \tfrac{1}{2} \eta \alpha \phavec{v}\hermconj (\mat{\Phi} \phamat{P} + \phamat{P} \mat{\Phi}) \phavec{v} \notag \\
    & \quad\, -2\alpha_1 \textstyle\sum\nolimits_{k=1}^{N}{ \bigl( \re{\pha{\lambda}_1} + \eta\alpha \Phi_k \bigr) \re{ \dot{\pha{v}}_k \phaconj{v}_k} } \notag \\
    &\leq \tfrac{1}{2} \phavec{v}\hermconj \phamat{P} (\phamat{A}\hermconj + \phamat{A} + 2 \eta \alpha) \phamat{P}\, \phavec{v} \notag \\
    & \quad\, -2\alpha_1 \textstyle\sum\nolimits_{k=1}^{N}{ \bigl( \re{\pha{\lambda}_1} + \eta\alpha \Phi_k \bigr) \re{ \dot{\pha{v}}_k \phaconj{v}_k} } \notag \\
    \label{v-dot-expression}
    &\leq - \eta c \norm{\phavec{v}} _{\calS} ^2 - 2\alpha_1 \textstyle\sum\nolimits_{k=1}^{N}{ \bigl( \re{\pha{\lambda}_1} + \eta\alpha \Phi_k \bigr) \re{ \dot{\pha{v}}_k \phaconj{v}_k} }.
\end{align}
We further bound the summation term in \eqref{v-dot-expression} as
\begin{align}
    &\textstyle\sum\nolimits_{k=1}^{N}{ \bigl( \re{\pha{\lambda}_1} + \eta\alpha \Phi_k \bigr) \re{ \dot{\pha{v}}_k \phaconj{v}_k} } \notag \\
    &= \mathscr{R} \Bigl[ \phavec{v}\hermconj \bigl( \re{\pha{\lambda}_1} \mat{I}_N + \eta\alpha \mat{\Phi} \bigr) \dot {\phavec{v}} \Bigr] \notag \\
    &= \mathscr{R} \Bigl[ \phavec{v}\hermconj \bigl( \re{\pha{\lambda}_1} \mat{I}_N + \eta\alpha \mat{\Phi} \bigr) \bigl( \phamat{A} + \eta\alpha \mat{\Phi} \bigr) \phavec{v} \Bigr] \notag \\
    &= \mathscr{R} \Bigl[\phavec{v}\hermconj \bigl( \re{\pha{\lambda}_1} \mat{I}_N + \eta\alpha \mat{\Phi} \bigr)  (\phamat{A} - {\pha{\lambda}_1}\mat{I}_N)  \phamat{P}\, \phavec{v}\Bigr] \notag \\
    & \quad\, + \mathscr{R} \Bigl[\phavec{v}\hermconj \bigl( \re{\pha{\lambda}_1} \mat{I}_N + \eta\alpha \mat{\Phi} \bigr) \bigl( {\pha{\lambda}_1}\mat{I}_N + \eta\alpha \mat{\Phi} \bigr) \phavec{v}\Bigr] \notag \\
    &= \mathscr{R} \Bigl[\phavec{v}\hermconj \bigl( \re{\pha{\lambda}_1} \mat{I}_N + \eta\alpha \mat{\Phi} \bigr)  (\phamat{A} - {\pha{\lambda}_1}\mat{I}_N)  \phamat{P}\, \phavec{v}\Bigr] \notag \\
    & \quad\, + \phavec{v}\hermconj \bigl( \re{\pha{\lambda}_1} \mat{I}_N + \eta\alpha \mat{\Phi} \bigr) \bigl( \re{\pha{\lambda}_1}\mat{I}_N + \eta\alpha \mat{\Phi} \bigr) \phavec{v} \notag \\
    &\geq -\bigl \lvert\phavec{v}\hermconj \bigl( \re{\pha{\lambda}_1} \mat{I}_N + \eta\alpha \mat{\Phi} \bigr)  (\phamat{A} - {\pha{\lambda}_1}\mat{I}_N)  \phamat{P}\, \phavec{v}\bigr \rvert \notag \\
    & \quad\, + \norm{\bigl( \re{\pha{\lambda}_1} \mat{I}_N + \eta\alpha \mat{\Phi} \bigr)\phavec{v}}^2 \notag \\
    &\geq -\norm{\bigl( \re{\pha{\lambda}_1} \mat{I}_N + \eta\alpha \mat{\Phi} \bigr)\phavec{v}} \norm{\phamat{A} - {\pha{\lambda}_1}\mat{I}_N} \norm{\phavec{v}}_{\calS} \notag \\
    \label{sum-term}
    & \quad\, + \norm{\bigl( \re{\pha{\lambda}_1} \mat{I}_N + \eta\alpha \mat{\Phi} \bigr)\phavec{v}}^2.
\end{align}
We next show that $\forall \phavec{v} \in \bbC^N$, it holds that
\begin{equation}
\label{V-negative-def}
    \dot{V} \leq -\alpha_1 \left(\norm{\phamat{A} - {\pha{\lambda}_1}\mat{I}_N} \norm{\phavec{v}}_{\calS} + \norm{\bigl( \re{\pha{\lambda}_1} \mat{I}_N + \eta\alpha \mat{\Phi} \bigr)\phavec{v}} \right)^2.
\end{equation}
We substitute \eqref{sum-term} into \eqref{v-dot-expression} and then notice that the following inequality suffices to show \eqref{V-negative-def},
\begin{equation*}
\begin{aligned}
    &- \eta c \norm{\phavec{v}} _{\calS} ^2 + 2\alpha_1\norm{\bigl( \re{\pha{\lambda}_1} \mat{I}_N + \eta\alpha \mat{\Phi} \bigr)\phavec{v}} \norm{\phamat{A} - {\pha{\lambda}_1}\mat{I}_N} \norm{\phavec{v}}_{\calS}\\
    &- 2\alpha_1\norm{\bigl( \re{\pha{\lambda}_1} \mat{I}_N + \eta\alpha \mat{\Phi} \bigr)\phavec{v}}^2\\
    &\leq -\alpha_1 \left(\norm{\phamat{A} - {\pha{\lambda}_1}\mat{I}_N} \norm{\phavec{v}}_{\calS} + \norm{\bigl( \re{\pha{\lambda}_1} \mat{I}_N + \eta\alpha \mat{\Phi} \bigr)\phavec{v}} \right)^2.
\end{aligned} 
\end{equation*}
This inequality is equivalent to $\vect{x}(\phavec{v}) \trans \mat{Q} \vect{x}(\phavec{v})  \geq 0,\, \forall \phavec{v} \in \bbC^N$, where $\vect{x}(\phavec{v}) \coloneqq \bigl[ \norm{\phavec{v}}_{\calS},\, \norm{\bigl( \re{\pha{\lambda}_1} \mat{I}_N + \eta\alpha \mat{\Phi} \bigr)\phavec{v}} \bigr]\trans$ and 
\begin{equation*}
    \mat{Q} \coloneqq \begin{bmatrix}
        \eta c - \alpha_1 \norm{\phamat{A} - {\pha{\lambda}_1}\mat{I}_N}^2 & -2\alpha_1 \norm{\phamat{A} - {\pha{\lambda}_1}\mat{I}_N}\\
        -2\alpha_1 \norm{\phamat{A} - {\pha{\lambda}_1}\mat{I}_N} & \alpha_1
    \end{bmatrix}.
\end{equation*}
With $\alpha_1 > 0$ given in \eqref{alpha1-def}, $\mat{Q}$ is positive semidefinite since the Schur complement gives $\eta c - 5 \alpha_1 \norm{\phamat{A} - {\pha{\lambda}_1}\mat{I}_N}^2 \geq 0$. We now conclude that \eqref{V-negative-def} holds. Further, we have $\dot{V} = 0$ if and only if $\norm{\phavec{v}}_{\calS} = 0$ and $\norm{\bigl( \re{\pha{\lambda}_1} \mat{I}_N + \eta\alpha \mat{\Phi} \bigr)\phavec{v}} = 0$. Both the set $\calT$ and the origin $\mathbbb{0}_N$ satisfy $\dot{V} = 0$. We refer to \cite{Gross-dVOC} for the result that $\mathbbb{0}_N$ is an unstable equilibrium and its region of attraction is a zero measure set. Namely, the initial states in this region of attraction will lead the voltage to collapse to zero ($\mathbbb{0}_N$). Hence, the set $\calT$ is almost globally asymptotically stable for almost all initial states except those contained in this zero measure set \cite[Thm. 1]{Gross-dVOC}.
\end{proof}

Following up on Theorem \ref{theorem-stability}, we can further consider the network dynamics and the voltage and current control dynamics. In this case, we can extend the stability analysis for non-nominal synchronous steady states to multiple time scales by applying nested singular perturbations \cite{Subotic-dVOC}.

\section{Case Studies}
\label{case-studies}

\begin{figure}
  \begin{center}
  \includegraphics[width=7cm]{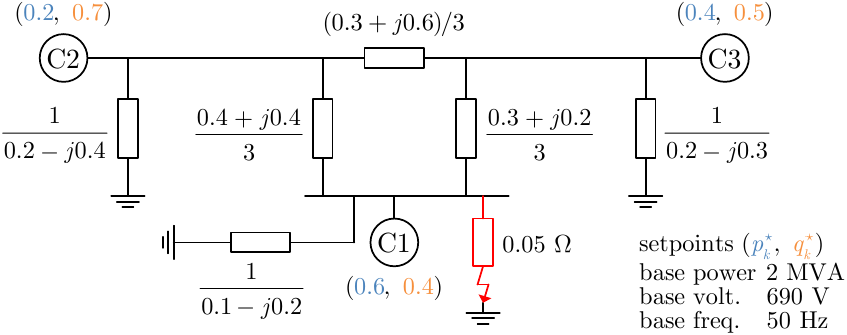}
  \caption{{A three-bus test system with inconsistent power setpoints and non-uniform $r/\ell$ ratios.}}
  \label{test-system}
  \end{center}$\vspace{-3mm}$
\end{figure}

\begin{figure}
  \begin{center}
  \includegraphics[width=7.3cm]{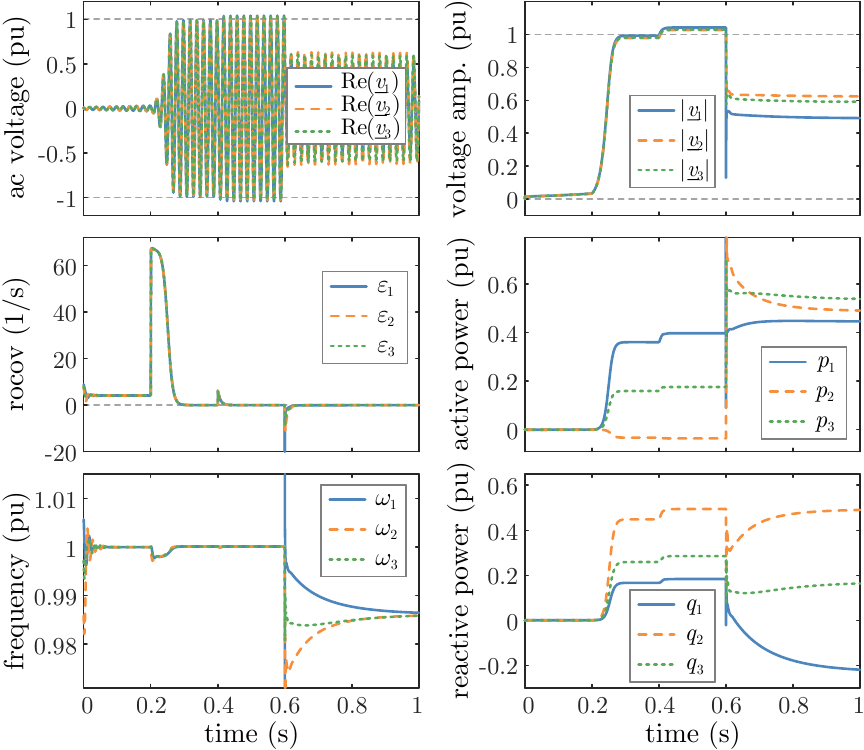}
  \caption{Simulation of a black start, followed by voltage regulation enabled at $0.2\ \rm{s}$, voltage setpoints scaling up to $105\%$ uniformly at $0.4\ \rm{s}$, and a short-circuit fault occurring at $0.6\ \rm{s}$.}
  \label{simulation}
  \end{center}$\vspace{-7mm}$
\end{figure}

We illustrate the theoretical results by an electromagnetic transient (EMT) simulation on a three-bus system in Fig.~\ref{test-system}, where the system model, parameters, and control gains remain identical to those in \cite{He-cplx-freq-sync}. We calculate the synchronous complex frequency $\pha{\varpi}_{\rm sync}$ by $\pha{\lambda}_1$ as $4.13 + j314.19\, {\rm rad/s}$, and then specify $(v_1^{\star},v_2^{\star},v_3^{\star}) = (0.9612,0.9469,0.9498)$ using Condition \ref{condi-setpoints}. Consider the constraint on the steady state as $\bar{\delta} = \pi/6 \in [0, \pi/2)$ as well as $\bar{\gamma} = 0.2 \in (0, 1)$. By identifying that Condition \ref{condition-stability} holds, we conclude that the system is almost globally asymptotically stable.

The system black-starts from a point close to the origin. In Fig.~\ref{simulation}, we observe that complex-frequency synchronization is achieved during $0$ to $0.2\ \rm{s}$. In this period, voltage regulation is disabled, and the voltage amplitudes increase exponentially. At $0.2\ \rm{s}$, we enable the voltage regulation, after which the voltage amplitudes are lifted rapidly close to their setpoints (but with deviations due to the steady state being drooped). At $0.4\ \rm{s}$, we uniformly scale the voltage setpoints up to $105\%$ (Condition \ref{condi-setpoints} still holds), and then the voltages converge to a new steady state while the frequency settles down to the original steady state. At $0.6\ \rm{s}$, a short-circuit fault occurs. A virtual impedance-based current-limiting strategy is employed to prevent the converters from overcurrent. The fault branch and the introduced virtual impedance alter the equivalent network, and then the original voltage setpoints are no longer consistent with Condition \ref{condi-setpoints}. We note, however, that the system remains stable in a new synchronous steady state due to the robustness of the droop-like behavior.

We provide more insights into stability under fault conditions. When a grid fault occurs, the grid enters into an abnormal operation stage, during which converters should maintain grid-forming operation with auxiliary control/protection strategies. When a converter is disconnected due to overcurrent/overvoltage, or when a fault line trips, the system changes accordingly, but its model can remain in the same form as under normal conditions. System operators need to screen out critical contingencies and assess the stability of the corresponding systems. The linear results in \cite{He-cplx-freq-sync} and the nonlinear ones of this work can be utilized for this purpose.

\section{Conclusion}
\label{conclusion}

We investigate the nonlinear stability problem of grid-forming complex droop control (i.e., dVOC) in converter-based power systems. We provide parametric conditions for almost globally asymptotic stability of complex droop control with respect to non-nominal synchronous steady states, which apply to networks with non-uniform $r/\ell$ ratios. The conditions quantify the operational requirements for a global stability guarantee, thus providing practical guidelines for the stable operation of converter-based power systems (typically microgrids). The results also suggest that complex droop control has better stability properties than the standard droop control. Our future work will address the relaxation of Condition \ref{condi-setpoints} to all conceivable synchronous steady states.

\section*{Appendix}

\begin{lemma}
\label{lemma-aux1}
    Under Condition \ref{condi-setpoints}, $\forall \phavec{v} \in \bbC^N$, it holds that
    \begin{equation}
    \label{lemma-aux1-ineq}
    \phavec{v}\hermconj \bigl(\mat{\Phi} \phamat{P} + \phamat{P} \mat{\Phi} \bigr) \phavec{v}
    \leq 2 \phavec{v}\hermconj \phamat{P}\, \phavec{v}.
    \end{equation}
\end{lemma}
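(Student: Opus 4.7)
The plan is to reduce the matrix inequality \eqref{lemma-aux1-ineq} to a single scalar moment inequality via an orthogonal decomposition of $\phavec{v}$, and then close it with a chain of Cauchy--Schwarz and Jensen bounds. First, using the Hermitian nature of $\phamat{P}$ and $\mat{\Phi}$, I would rewrite the left-hand side as $2\,\re{\phavec{v}\hermconj\mat{\Phi}\phamat{P}\phavec{v}}$ and split $\mat{\Phi} = \mat{I}_N - \mat{D}$, where $\mat{D} \coloneqq \mathrm{diag}\bigl(\{|\pha{v}_k|^2/v_k^{\star 2}\}_{k=1}^N\bigr) \succeq 0$. Decomposing $\phavec{v} = \pha{\mu}\phavec{\phi}_1 + \phavec{u}$ with $\pha{\mu} \coloneqq \phavec{\phi}_1\hermconj\phavec{v}/(\phavec{\phi}_1\hermconj\phavec{\phi}_1)$ and $\phavec{u} \coloneqq \phamat{P}\phavec{v}$ (so that $\phavec{\phi}_1\hermconj\phavec{u} = 0$) yields $\phavec{v}\hermconj\phamat{P}\phavec{v} = \norm{\phavec{u}}^2$ and, crucially, $\phavec{v}\hermconj\phavec{u} = \norm{\phavec{u}}^2 \in \bbR$; hence \eqref{lemma-aux1-ineq} is equivalent to the scalar inequality $\re{\phavec{v}\hermconj\mat{D}\phavec{u}} \geq 0$.

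The key algebraic input is Condition \ref{condi-setpoints}, which I would use in the form $v_k^\star = c\,|\pha{\phi}_{k1}|$ for some constant $c>0$ (with $\pha{\phi}_{k1}\neq 0$ guaranteed under Condition \ref{condition-sync}). Substituting $u_k = \pha{v}_k - \pha{\mu}\pha{\phi}_{k1}$, introducing $\pha{\beta}_k \coloneqq \pha{v}_k/\pha{\phi}_{k1}$, and letting $w_k \coloneqq |\pha{\phi}_{k1}|^2/\norm{\phavec{\phi}_1}^2$ (so that $\sum_k w_k = 1$ and $\pha{\mu} = \sum_k w_k \pha{\beta}_k$), a direct computation rewrites $\re{\phavec{v}\hermconj\mat{D}\phavec{u}}$, up to the positive prefactor $\norm{\phavec{\phi}_1}^2/c^2$, as
\begin{equation*}
    \textstyle\sum_k w_k |\pha{\beta}_k|^4 - \re{\Bigl(\sum_k w_k \pha{\beta}_k\Bigr)\Bigl(\sum_k w_k \phaconj{\beta}_k\,|\pha{\beta}_k|^2\Bigr)}.
\end{equation*}
I would then establish nonnegativity of this expression via two applications of Cauchy--Schwarz with respect to the weights $\{w_k\}$, giving $\bigl|\sum_k w_k \pha{\beta}_k\bigr|^2 \leq \sum_k w_k |\pha{\beta}_k|^2$ and $\bigl|\sum_k w_k \phaconj{\beta}_k |\pha{\beta}_k|^2\bigr|^2 \leq \bigl(\sum_k w_k |\pha{\beta}_k|^4\bigr)\bigl(\sum_k w_k |\pha{\beta}_k|^2\bigr)$, followed by Jensen's inequality $\bigl(\sum_k w_k |\pha{\beta}_k|^2\bigr)^2 \leq \sum_k w_k |\pha{\beta}_k|^4$; chaining these bounds yields $\bigl|\bigl(\sum_k w_k \pha{\beta}_k\bigr)\bigl(\sum_k w_k \phaconj{\beta}_k |\pha{\beta}_k|^2\bigr)\bigr| \leq \sum_k w_k |\pha{\beta}_k|^4$, which is the required estimate on the real part.

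The main obstacle is the reduction itself rather than the final moment bound. The matrix $\mat{D}\phamat{P} + \phamat{P}\mat{D}$ is not positive semidefinite in general, so one cannot conclude the inequality from a blanket PSD argument. It is precisely the alignment of the voltage setpoints with the dominant eigenvector $\phavec{\phi}_1$ enforced by Condition \ref{condi-setpoints} that allows the quadratic form to collapse into a fourth-moment expression in the single scalar $\pha{\beta}_k = \pha{v}_k/\pha{\phi}_{k1}$; once this collapse is recognized, the Cauchy--Schwarz/Jensen chain finishes the proof routinely.
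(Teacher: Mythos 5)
Your proof is correct, but it takes a genuinely different route from the paper's. The paper realifies the problem --- rewriting \eqref{lemma-aux1-ineq} in $\bbR^{2N}$ via Kronecker products, using $v_k^\star = \zeta\lvert\pha{\phi}_{k1}\rvert$ to absorb the setpoints into the eigenvector, and then simply citing \cite[Lemma 1]{Gross-dVOC} for the resulting inequality; no self-contained argument is given for the final step. You instead stay in complex coordinates and carry the argument to completion: the reduction of the two-sided form $\phavec{v}\hermconj(\mat{\Phi}\phamat{P}+\phamat{P}\mat{\Phi})\phavec{v}$ to $2\re{\phavec{v}\hermconj\mat{\Phi}\phamat{P}\phavec{v}}$ is valid since both matrices are Hermitian; the identity $\phavec{v}\hermconj\phamat{P}\phavec{v}=\norm{\phamat{P}\phavec{v}}^2$ (from $\phamat{P}\hermconj=\phamat{P}=\phamat{P}^2$) correctly collapses the claim to $\re{\phavec{v}\hermconj\mat{D}\phamat{P}\phavec{v}}\geq 0$ with $\mat{D}=\mathrm{diag}(\lvert\pha{v}_k\rvert^2/v_k^{\star2})$; and the substitution $v_k^\star$ proportional to $\lvert\pha{\phi}_{k1}\rvert$ (licensed by Condition \ref{condi-setpoints}, with $\pha{\phi}_{k1}\neq 0$ under Condition \ref{condition-sync}) turns this into the weighted fourth-moment inequality $\re{\bigl(\sum_k w_k\pha{\beta}_k\bigr)\bigl(\sum_k w_k\phaconj{\beta}_k\lvert\pha{\beta}_k\rvert^2\bigr)}\leq\sum_k w_k\lvert\pha{\beta}_k\rvert^4$, which your Cauchy--Schwarz/Jensen chain establishes (the three bounds multiply to give exactly $\bigl(\sum_k w_k\lvert\pha{\beta}_k\rvert^4\bigr)^2$ on the squared modulus). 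You also correctly identify the essential point that $\mat{D}\phamat{P}+\phamat{P}\mat{D}$ need not be positive semidefinite, so the statement is really a quartic (not quadratic) inequality in $\phavec{v}$ that hinges on the alignment of the setpoints with $\phavec{\phi}_1$. What your approach buys is a fully self-contained proof that avoids both the $2N$-dimensional realification and the external citation; what it costs is only a small notational care (your scaling constant $c$ collides with the $c$ already defined in \eqref{c-def} of the paper, so rename it, e.g., to $\zeta$ as the paper does).
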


\begin{proof}
We rewrite \eqref{lemma-aux1-ineq} in real-valued variables as
\begin{equation}
\label{lemma-aux1-eq-ineq}
    \vect{v}\trans \bigl( \mat{\Psi} \mat{P} + \mat{P} \mat{\Psi} \bigr) \vect{v} = 2\vect{v}\trans \mat{P} \mat{\Psi} \vect{v} \leq 2\vect{v}\trans \mat{P} \vect{v}
\end{equation}
where $\vect{v} \coloneqq \re{\phavec{v}} \otimes [1 \ 0]\trans + \im{\phavec{v}} \otimes [0\ 1]\trans$, $\mat{\Psi} \coloneqq \mat{\Phi} \otimes \mat{I}_2$, and $\mat{P} \coloneqq 
\mat{I}_{2N} - {\vect{\phi}_1} {\vect{\phi}_1 \trans} / \textstyle\sum\nolimits_{k=1}^N  \lvert \pha{\phi}_{k1} \rvert ^2$ with $\vect{\phi}_1 \coloneqq \re{\phavec{\phi}_1} \otimes \left[\begin{smallmatrix} 1 & 0\\ 0 & 1 \end{smallmatrix}\right] + \im{\phavec{\phi}_1} \otimes \left[\begin{smallmatrix} 0 & -1\\1 & 0 \end{smallmatrix}\right]$.
We expand $\vect{v}\trans \mat{P} \mat{\Psi} \vect{v}$ as
\begin{equation*}
    \vect{v}\trans \mat{P} \mat{\Phi} \vect{v} = \vect{v}\trans \mat{P} \vect{v} - \vect{v}\trans {\rm diag} \bigl(\bigl\{\tfrac{v_k^2}{v_k^{\star 2}} \mat{I}_2 \bigr\}_{k = 1}^N \bigr) \mat{P} \vect{v}.
\end{equation*} 
Since $v_l^{\star}/v_k^{\star} = \lvert \pha{\phi}_{l1} \rvert / \lvert \pha{\phi}_{k1} \rvert,\, \forall k,l \in \calN$, holds under Condition \ref{condi-setpoints}, there exists a positive real number $\zeta > 0$ such that $v_k^{\star} = \zeta \lvert \pha{\phi}_{k1} \rvert, \, \forall k \in \calN$. Then, the inequality in \eqref{lemma-aux1-eq-ineq} is equivalent to
\begin{align*}
    - \vect{v}\trans {\rm diag} \bigl(\bigl\{\tfrac{v_k^2}{v_k^{\star 2}} \mat{I}_2 \bigr\}_{k = 1}^N \bigr) \mat{P} \vect{v} &\leq 0,\\
    \vect{v}\trans \mat{P} \vect{v} - \zeta^{ 2}\vect{v}\trans {\rm diag} \bigl(\bigl\{\tfrac{v_k^2}{v_k^{\star 2}} \mat{I}_2 \bigr\}_{k = 1}^N \bigr) \mat{P} \vect{v} &\leq \vect{v}\trans \mat{P} \vect{v},\\
    \label{lemma-aux1-ineq1}
    \vect{v}\trans \mat{P} \vect{v} - \vect{v}\trans {\rm diag} \bigl(\bigl\{\tfrac{v_k^2}{\lvert \pha{\phi}_{k1} \rvert ^2} \mat{I}_2 \bigr\}_{k = 1}^N \bigr) \mat{P} \vect{v} &\leq \vect{v}\trans \mat{P} \vect{v}.
\end{align*}
The last inequality above is equivalent to \cite[Lemma 1]{Gross-dVOC}.
\end{proof}

\begin{lemma}
\label{lemma-aux2}
    Under Condition \ref{condition-stability}, $\forall \phavec{v} \in \bbC^N$, it holds that
    \begin{equation}
    \label{lemma-aux2-ineq}
        \tfrac{1}{2}\phavec{v}\hermconj \phamat{P} \bigl(\phamat{A}\hermconj + \phamat{A} + 2\eta\alpha \bigr) \phamat{P}\, \phavec{v} \leq - \eta c \norm{\phavec{v}} _{\calS} ^2.
    \end{equation}
\end{lemma}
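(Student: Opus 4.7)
The plan is to unfold the quadratic form on the left by exploiting the structure of $\phamat{A}$, bound the two remaining ``good'' pieces separately, and then recognize the resulting coefficient as exactly $-\eta c$ with $c$ from \eqref{c-def}.

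First I would compute $\phamat{A}\hermconj+\phamat{A}$. Since $\pha{\varpi}_0 = j\omega_0$ is purely imaginary, the identity term vanishes, and because $\phamat{Y}$ is complex symmetric and $\phamat{K}$ is diagonal, we get
\begin{equation*}
    \phamat{A}\hermconj+\phamat{A} \;=\; 2\eta\,\re{e^{j\varphi}\phamat{K}} \;-\; 2\eta\,\re{e^{j\varphi}\phamat{Y}}.
\end{equation*}
Substituting into the left-hand side of \eqref{lemma-aux2-ineq} and using $\phamat{P}\hermconj=\phamat{P}=\phamat{P}^2$, the task reduces to showing
\begin{equation*}
    \eta\,\phavec{v}\hermconj\phamat{P}\bigl(\re{e^{j\varphi}\phamat{K}}+\alpha\mat{I}_N\bigr)\phamat{P}\,\phavec{v}
    \;-\;\eta\,\phavec{v}\hermconj\phamat{P}\,\re{e^{j\varphi}\phamat{Y}}\,\phamat{P}\,\phavec{v}
    \;\le\; -\eta c\,\norm{\phamat{P}\phavec{v}}^2.
\end{equation*}

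Next I would bound each piece. The first term is easy: since $\re{e^{j\varphi}\phamat{K}}+\alpha\mat{I}_N$ is a real diagonal matrix with largest entry $\max_k\re{e^{j\varphi}\phaconj{\varsigma}_k^{\star}}+\alpha$, a standard Rayleigh-quotient estimate gives
\begin{equation*}
    \phavec{v}\hermconj\phamat{P}\bigl(\re{e^{j\varphi}\phamat{K}}+\alpha\mat{I}_N\bigr)\phamat{P}\,\phavec{v}
    \;\le\; \Bigl(\max_k\re{e^{j\varphi}\phaconj{\varsigma}_k^{\star}}+\alpha\Bigr)\,\norm{\phamat{P}\phavec{v}}^2.
\end{equation*}
The second, Laplacian, term is where the work lies and I would invoke the bound already established in \cite{He-cplx-freq-sync}, namely
\begin{equation*}
    \phavec{v}\hermconj\phamat{P}\,\re{e^{j\varphi}\phamat{Y}}\,\phamat{P}\,\phavec{v}
    \;\ge\; \tfrac{1+\cos\bar{\delta}}{2}(1-\bar{\gamma})^2\,\lambda_2\!\bigl(\re{e^{j\varphi}\phamat{Y}}\bigr)\,\norm{\phamat{P}\phavec{v}}^2.
\end{equation*}
This is the Laplacian Fiedler-type inequality, corrected by the prefactor $\tfrac{1+\cos\bar\delta}{2}(1-\bar\gamma)^2$ that accounts for the fact that $\phamat{P}$ projects orthogonally to the dominant eigenvector $\phavec{\phi}_1$ rather than to the all-ones vector $\vect{1}$ (the kernel of the Laplacian), under the synchronous-state bounds $|\theta_k-\theta_l|\le\bar\delta$ and $|v_k/v_l-1|\le\bar\gamma$ from Condition~\ref{condition-sync}.

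Finally, combining the two bounds and using $\norm{\phamat{P}\phavec{v}}^2 = \norm{\phavec{v}}_{\calS}^2$, the bracketed coefficient on the right becomes exactly
\begin{equation*}
    -\eta\Bigl[\tfrac{1+\cos\bar{\delta}}{2}(1-\bar{\gamma})^2\,\lambda_2\!\bigl(\re{e^{j\varphi}\phamat{Y}}\bigr) - \max_k\re{e^{j\varphi}\phaconj{\varsigma}_k^{\star}} - \alpha\Bigr] \;=\; -\eta c,
\end{equation*}
which is nonpositive by Condition~\ref{condition-stability} (guaranteeing $c>0$), yielding \eqref{lemma-aux2-ineq}. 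The main obstacle is really just justifying the Laplacian inequality with the geometric prefactor; this I would not redo but import as-is from \cite{He-cplx-freq-sync}, since everything else is linear algebra bookkeeping once $\phamat{A}\hermconj+\phamat{A}$ has been put into symmetric real-part form.
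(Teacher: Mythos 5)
Your proposal is correct and follows essentially the same route as the paper: reduce the claim, via $\phamat{A}\hermconj+\phamat{A}=2\eta\,\re{e^{j\varphi}\phamat{K}}-2\eta\,\re{e^{j\varphi}\phamat{Y}}$ and the definition of $c$ in \eqref{c-def}, to a bound on $\phavec{v}\hermconj\phamat{P}(\phamat{A}\hermconj+\phamat{A})\phamat{P}\,\phavec{v}$, and import the key projected-Laplacian inequality with the prefactor $\tfrac{1+\cos\bar\delta}{2}(1-\bar\gamma)^2$ from \cite[Thm.~2]{He-cplx-freq-sync}. The paper's proof is just a terser version of the same argument (it states the rearranged inequality and cites the reference without separating the diagonal Rayleigh-quotient piece from the Laplacian piece), so your explicit decomposition adds clarity but no new content.
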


\begin{proof}
With $c$ given in \eqref{c-def} under Condition \ref{condition-stability}, the inequality \eqref{lemma-aux2-ineq} is equivalent to
\begin{equation*}
\begin{aligned}
    &\tfrac{1}{2}\phavec{v}\hermconj \phamat{P} \bigl(\phamat{A}\hermconj + \phamat{A}) \phamat{P}\, \phavec{v} \leq  \\
    &\norm{\phavec{v}} _{\calS} ^2 \Bigl [\max\limits_k \re{{e^{j\varphi }}\phaconj{\varsigma}_k^{\star}} - \tfrac{{1 + \cos \bar {\delta}}}{2}{\left(1 - \bar{\gamma} \right)^2} {\lambda _2} \bigl(\re{{e^{j\varphi }} \phamat{Y}}\bigr) \Bigr].
\end{aligned}
\end{equation*}
This holds by referring to the result on complex-frequency synchronization in \cite[Thm. 2]{He-cplx-freq-sync}.
\end{proof}

\bibliographystyle{IEEEtran}
\bibliography{IEEEabrv,Bibliography}

\begin{thebibliography}{10}
\providecommand{\url}[1]{#1}
\csname url@samestyle\endcsname
\providecommand{\newblock}{\relax}
\providecommand{\bibinfo}[2]{#2}
\providecommand{\BIBentrySTDinterwordspacing}{\spaceskip=0pt\relax}
\providecommand{\BIBentryALTinterwordstretchfactor}{4}
\providecommand{\BIBentryALTinterwordspacing}{\spaceskip=\fontdimen2\font plus
\BIBentryALTinterwordstretchfactor\fontdimen3\font minus
  \fontdimen4\font\relax}
\providecommand{\BIBforeignlanguage}[2]{{%
\expandafter\ifx\csname l@#1\endcsname\relax
\typeout{** WARNING: IEEEtran.bst: No hyphenation pattern has been}%
\typeout{** loaded for the language `#1'. Using the pattern for}%
\typeout{** the default language instead.}%
\else
\language=\csname l@#1\endcsname
\fi
#2}}
\providecommand{\BIBdecl}{\relax}
\BIBdecl

\bibitem{Chen-100}
J.~Chen \emph{et~al.}, ``100\% converter-interfaced generation using virtual
  synchronous generator control: A case study based on the {Irish} system,''
  \emph{Electr. Power Syst. Res.}, vol. 187, p. 106475, 2020.

\bibitem{Rosso-GFM-review}
R.~Rosso, X.~Wang, M.~Liserre, X.~Lu, and S.~Engelken, ``Grid-forming
  converters: Control approaches, grid-synchronization, and future trends—a
  review,'' \emph{{IEEE} Open J. Ind. Appl.}, vol.~2, pp. 93--109, 2021.

\bibitem{Lu-param-tuning}
M.~Lu, S.~Dutta, V.~Purba, S.~Dhople, and B.~Johnson, ``A grid-compatible
  virtual oscillator controller: Analysis and design,'' in \emph{Proc. IEEE
  Energy Convers. Congr. Expo.}, 2019, pp. 2643--2649.

\bibitem{droop-control}
M.~C. Chandorkar, D.~M. Divan, and R.~Adapa, ``Control of parallel connected
  inverters in standalone ac supply systems,'' \emph{{IEEE} Trans. Ind. Appl.},
  vol.~29, no.~1, pp. 136--143, 1993.

\bibitem{Dorfler-kuramoto-transient-stability}
F.~D\"{o}rfler and F.~Bullo, ``Synchronization and transient stability in power
  networks and nonuniform {Kuramoto} oscillators,'' \emph{{SIAM} J. Control
  Optim.}, vol.~50, no.~3, pp. 1616--1642, 2012.

\bibitem{Simpson-Porco}
J.~W. Simpson-Porco, F.~Dörfler, and F.~Bullo, ``Synchronization and power
  sharing for droop-controlled inverters in islanded microgrids,''
  \emph{Automatica}, vol.~49, no.~9, pp. 2603--2611, 2013.

\bibitem{Schiffer-cell-structure}
J.~Schiffer, D.~Efimov, and R.~Ortega, ``Global synchronization analysis of
  droop-controlled microgrids—a multivariable cell structure approach,''
  \emph{Automatica}, vol. 109, p. 108550, 2019.

\bibitem{Schiffer-droop}
J.~Schiffer, R.~Ortega, A.~Astolfi, J.~Raisch, and T.~Sezi, ``Conditions for
  stability of droop-controlled inverter-based microgrids,'' \emph{Automatica},
  vol.~50, no.~10, pp. 2457--2469, 2014.

\bibitem{Simpson-voltage}
J.~W. Simpson-Porco, F.~D{\"o}rfler, and F.~Bullo, ``Voltage stabilization in
  microgrids via quadratic droop control,'' \emph{{IEEE} Trans. Autom.
  Control}, vol.~62, no.~3, pp. 1239--1253, 2017.

\bibitem{Colombino-dVOC}
M.~Colombino, D.~Groß, J.-S. Brouillon, and F.~Dörfler, ``Global phase and
  magnitude synchronization of coupled oscillators with application to the
  control of grid-forming power inverters,'' \emph{{IEEE} Trans. Autom.
  Control}, vol.~64, no.~11, pp. 4496--4511, 2019.

\bibitem{Gross-dVOC}
D.~Groß, M.~Colombino, J.-S. Brouillon, and F.~Dörfler, ``The effect of
  transmission-line dynamics on grid-forming dispatchable virtual oscillator
  control,'' \emph{{IEEE} Trans. Control Netw. Syst.}, vol.~6, no.~3, pp.
  1148--1160, 2019.

\bibitem{Subotic-dVOC}
I.~Subotić, D.~Groß, M.~Colombino, and F.~Dörfler, ``A {Lyapunov} framework
  for nested dynamical systems on multiple time scales with application to
  converter-based power systems,'' \emph{{IEEE} Trans. Autom. Control},
  vol.~66, no.~12, pp. 5909--5924, 2021.

\bibitem{Lu-voc}
M.~Lu, ``Virtual oscillator grid-forming inverters: State of the art, modeling,
  and stability,'' \emph{{IEEE} Trans. Power Electron.}, vol.~37, no.~10, pp.
  11\,579--11\,591, 2022.

\bibitem{He-cplx-freq-sync}
\BIBentryALTinterwordspacing
X.~He, V.~Häberle, and F.~Dörfler, ``Complex-frequency synchronization of
  converter-based power systems,'' 2022, submitted to \textit{IEEE Trans.
  Control Netw. Syst.} [Online]. Available:
  \url{https://arxiv.org/abs/2208.13860}
\BIBentrySTDinterwordspacing

\bibitem{Milano-complex-freq}
F.~Milano, ``Complex frequency,'' \emph{{IEEE} Trans. Power Syst.}, vol.~37,
  no.~2, pp. 1230--1240, 2022.

\bibitem{Dorfler-Kron-red}
F.~Dörfler and F.~Bullo, ``Kron reduction of graphs with applications to
  electrical networks,'' \emph{{IEEE} Trans. Circuits Syst. I-Regul. Pap.},
  vol.~60, no.~1, pp. 150--163, 2013.

\end{thebibliography}

\end{document}